\documentclass[11pt,twoside,letterpaper]{llncs}
\bibliographystyle{plain}
\usepackage{times}
\usepackage[T1]{fontenc}   
\usepackage{amsmath}
\usepackage{amssymb}
\usepackage{fancybox}
\usepackage{multirow}
\usepackage{graphics,graphicx}
\usepackage{microtype}
\usepackage{booktabs}

\newcommand{\junk}[1]{}

\renewcommand{\baselinestretch}{1}

\pagestyle{plain}

\begin{document}

\title{An output-sensitive algorithm for
all-pairs shortest paths in directed acyclic graphs}

\author{
Andrzej Lingas
\inst{1}
\and
Mia Persson
\inst{2}
\and
Dzmitry Sledneu
\inst{3}
\institute{
Department of Computer Science, Lund University, 22100 Lund, Sweden. 
\texttt{Andrzej.Lingas@cs.lth.se}
\and
Department of Computer Science and Media Technology, Malm\"o University, 20506 Malm\"o, Sweden.
\texttt{miapersson@mau.se}
\and
Malm\"o, Sweden.
\texttt{dzmitry.sledneu@gmail.com}}
}

\maketitle

\begin{abstract}
  A straightforward dynamic programming method for the single-source
  shortest paths problem (SSSP) in an edge-weighted directed acyclic
  graph (DAG) processes the vertices in a topologically sorted order.
  Yen, by decomposing the input edge-weighted directed graph in two
  DAGs, could use this method iteratively to improve the time
  complexity of the SSSP Bellman-Ford algorithm for edge-weighted
  directed graphs by a constant factor.

  First, we similarly iterate this method alternatively in a
  breadth-first search sorted order and the reverse order on an input
  directed graph with both positive and negative real edge weights,
  $n$ vertices and $m$ edges.
  For a positive integer $t,$
  after $O(t)$ iterations in $O(tm)$
  time, we obtain for each vertex $v$ a path distance from the source
  to $v$ not exceeding that yielded by the shortest path from the
  source to $v$ among the so called {\em$ t+$light paths}.  A directed
  path between two vertices is $t+$light if it contains at most $t$
  more edges than the minimum edge-cardinality directed path between
  these vertices.  After $O(n)$ iterations, we obtain an $O(nm)$-time
  solution to SSSP in directed graphs with real edge weights matching
  that of Bellman and Ford.

  Our main result is an output-sensitive algorithm for the all-pairs
  shortest paths problem (APSP) in DAGs with positive and negative
  real edge weights. It runs in time $O(\min \{n^{\omega},
  nm+n^2\log n\}+\sum_{v\in V}\text{indeg}(v)|\text{leaf}(T_v)|),$ where $n$ is
  the number of vertices, $m$ is the number of edges, $\omega$ is the
  exponent of fast matrix multiplication, $\text{indeg}(v)$ stands for
  the indegree of $v,$ $T_v$ is a tree of lexicographically-first
  shortest directed paths from all ancestors of $v$ to $v$, and
  $\text{leaf}(T_v)$ is the set of leaves in $T_v.$ Note that if $T_v$
  is a path the term $O(\text{indeg}(v)|\text{leaf}(T_v)|)$ equals
  $O(\text{indeg}(v))$ while if $T_v$ is a star with $v$ as a sink the
  term becomes $O(\text{indeg}(v)|T_v|).$ It also follows that if
  $\max_{v\in V} |\text{leaf}(T_v)|=O(n^{\alpha})$ then our APSP
  algorithm for DAGs runs in $O(n^{\omega}+mn^{\alpha})$ time.
  Similarly, if $\frac {\sum_{v\in V} |\text{leaf}(T_v)|}
  n=O(n^{\beta})$ then the algorithm runs in
  $O(n^{\omega}+n^{2+\beta})$ time.
  
Next, we discuss an extension of hypothetical improved upper
time-bounds for APSP in non-negatively edge-weighted DAGs to include
directed graphs with a polynomial number of large directed cycles.

Finally, we present experimental comparisons of our SSSP algorithm
with the Bellman-Ford one and our output-sensitive APSP algorithm for
edge-weighted DAGs with the standard APSP algorithm for edge-weighted
DAGs.  In particular, they show that our SSSP algorithm converges to
the true distances on dense edge-weighted pseudorandom graphs faster
than the Bellman-Ford algorithm does.
\end{abstract}
\vfill
\newpage
\section{Introduction}
The {\em length} of a path in an {\em edge-weighted} graph is the sum
of the weights of edges on the path. A {\em shortest} path between two
vertices in a graph has minimal length among all paths between these
vertices. The {\em distance} between vertices $v$ and $u$
is the length of a shortest path from $v$ to $u.$
If the graph is directed, the paths are supposed to be also
directed.

Shortest path problems, in particular the single-source shortest paths
problem (SSSP) and the all-pairs shortest paths problem (APSP), belong
to the most basic and important problems in graph algorithms
\cite{CLR,Z01}. There are several variants of SSSP and APSP depending
among other things on the restrictions on edge weights and the input
graphs.  The input to these problems is a directed or an undirected
edge-weighted graph. The output is a representation of shortest
paths between the source and all other vertices or between all pairs of
vertices in the graph, respectively.

In the general case of directed graphs (without negative cycles), when both positive and
negative real edge weights are allowed, the difference between the best
known asymptotic upper time-bounds for SSSP and APSP respectively is
surprisingly small. Namely, if the input directed graph has $n$
vertices and $m$ edges with real weights, then the best known SSSP
algorithm due to Bellman \cite{Bel58}, Ford \cite{For56}, and Moore
\cite{Mo59} runs in $O(nm)$ while the APSP can be solved already in
$O(nm+n^2\log n)$ time \cite{sur17,Z01}.  The APSP solution uses 
Johnson's $O(nm)$-time reduction of the general edge weight case to
the non-negative edge case and then it runs Dijkstra's algorithm
\cite{Dij59} $n$ times \cite{sur17,Z01}.  The latter upper time-bound
for APSP with arbitrary real edge-weights has been more recently improved to
$O(nm+n^2\log \log n)$ by Pettie in \cite{Pet04}.
Note that the aforementioned
best asymptotic upper time bounds for SSSP and APSP are different only for
sparse graphs with $o(n\log \log n)$ edges. Interestingly,
when edge weights are integers,
the best known upper time-bound for APSP 
just in terms of $n$ is  $n^3/2^{\Omega(\sqrt {\log n})}$
\cite{CW16}.

The situation alters dramatically when the input directed graph is
acyclic, i.e., when it does not contain directed cycles. Then, a
simple dynamic programming algorithm processing vertices in a
topologically sorted order solves the SSSP problem in $O(n+m)$ time
\cite{CLR}, an $O(n(n+m))$-time solution to the APSP problem in this case
follows.

In fact, Yen could use the aforementioned method for SSSP in DAGs
iteratively in order to improve the time complexity of Bellman-Ford
algorithm for directed graphs by a constant factor \cite{Yen70}.
Bellman-Ford algorithm runs in $n-1$ iterations. In each iteration, for
each edge $e$, the current distance (from the source) at the head
of $e$ is compared to the sum of the current distance at the tail of
$e$ and the weight if $e.$ If the sum is smaller the distance at the
head of $e$ is updated.  To achieve the improvement, Yen imposes a
linear order on the vertices of the input directed graph which yields
a decomposition of the graph into two DAGs.  Next, the SSSP method for
DAGs is run on each of the two DAGs instead of an iteration of
Bellman-Ford algorithm \cite{Yen70}.  Bannister and Eppstein
obtained a further improvement of the time complexity of Bellman-Ford
algorithm by a constant factor using a random linear order \cite{BE11}.

A pair of vertices in an edge weighted undirected or directed graph
can be connected by several paths, in particular several shortest
paths. Beside the length of a path, the number of edges forming it can
be an important characteristic.  For example, Zwick provided several
exact and approximation algorithms for all pairs {\em lightest} (i.e.,
having minimal number of edges) shortest paths in directed graphs with
restricted edge weights in \cite{Z99}.
\junk{Recall also the Bellman-Ford
algorithm for SSSP in directed graphs with positive and negative edge
weights. It runs in $n-1$ iterations, in each iteration for each edge
$e$ the current distance (from the source) at the head of $e$ is
compared to the sum of the current distance at the tail of $e$ and the
weight if $e.$ If the sum is smaller the distance at the head of $e$
is updated.
Note that after $k$ iterations the current distances are
not greater than the lengths of shortest corresponding paths using at
most $k$ edges.}

In this paper, first we consider {\em $t+$light paths}, i.e., directed
paths that have at most $t$ more edges than the paths with the same
endpoints having the minimal number of edges.  In part following
\cite{Yen70}, we iterate $O(t)$ times the SSSP method for DAGs on
two implicit
DAGs yielded by an extension of the BFS partial order to a
linear order.  The iterations alternatively process the vertices in a
breadth-first sorted order and the reverse order.  In result, we obtain
path distances from the source to all other vertices that are not
greater than the corresponding shortest-path distances for $t+$light
paths.  It takes $O(tm)$ time totally. For $t=n-2$, our method matches
that of Bellman-Ford for SSSP in directed graphs with real edge
weights.

A vertex $v$ is an {\em ancestor} ({\em direct ancestor},
respectively) of a vertex $u$ in a DAG if there is a directed path
(edge, respectively) from $v$ to $u$ in the DAG.

Our main result is an output-sensitive algorithm for the
APSP problem in DAGs.
It runs in time
  $O(\min \{n^{\omega}, nm+n^2\log n\}+\sum_{v\in V}\text{indeg}(v)|\text{leaf}(T_v)|),$ where
$n$ is the number of vertices, $m$ is the number of edges,
$\omega$ is the exponent
of fast $n\times n$ matrix multiplication
\footnote{$\omega$ is not greater than $2.3729$ 
\cite{AV21}.},
  $\text{indeg}(v)$ stands for the indegree
  of $v,$ $T_v$ is a tree of lexicographically-first shortest
  directed paths from all ancestors of $v$ to $v$, $\text{leaf}(T_v)$ is
  the set of leaves in $T_v,$ and for a set $X$, $|X|$ stands for its size.
    Note that if $T_v$ is a path
  the term $O(\text{indeg}(v)|\text{leaf}(T_v)|)$ equals
  $O(\text{indeg}(v))$ while when $T_v$ is a star with $v$ as a
  sink the term becomes $O(\text{indeg}(v)|T_v|).$
Thus, the running time of the APSP algorithm
can be so low as $O(n^{\omega})$ and so high as $O(n^{\omega}+nm).$
It follows also that if $\alpha$ is defined by
$\max_{v\in V} |\text{leaf}(T_v)|=O(n^{\alpha})$
  then the algorithm
  runs in $O(n^{\omega}+mn^{\alpha})$ time.
  Similarly, if $\beta $ is defined by $\frac {\sum_{v\in V} |\text{leaf}(T_v)|} n=O(n^{\beta})$
  then the algorithm runs in $O(n^{\omega}+n^{2+\beta})$ time.

Next, we provide an extension of hypothetical, improved upper
time-bounds for APSP in DAGs with non-negative edge weights to include
directed graphs with a polynomial number of large directed cycles.

Finally, we present experimental comparisons of our SSSP
algorithm with the Bellman-Ford one and our output-sensitive APSP algorithm
for edge-weighted DAGs with the standard APSP algorithm for edge-weighted DAGS.
In particular, they show that our SSSP algorithm
converges to the true shortest-path distances
on dense edge-weighted pseudorandom graphs faster than
the Bellman-Ford algorithm does. On the other hand, they 
exhibit only a slight time-performance advantage of our APSP algorithm 
over the standard APSP algorithm on dense edge-weighted pseudorandom DAGs.
Presumably, the shortest-path trees in the aforementioned DAGs
have large number of leaves.
\subsection{Paper organization}
In the next section, we provide our solution to the SSSP problem in
directed graphs with real edge weights based on the SSSP method for
DAGs and the BFS partial order in terms of $t+$light paths.
Section 3 is devoted
to our output-sensitive algorithm for the APSP problem in DAGs with
real edge weights and its analysis. In Section 4, we discuss the
extension of hypothetical, improved bounds for APSP in DAGs with
non-negatively weighted edges to directed graphs with a
polynomial number of large directed cycles.
Section 5 presents our experimental results.
We conclude with final
remarks.

\section{An application of the SSSP method for DAGs}

The SSSP problem for directed acyclic graphs can be solved by
topologically sorting the DAG vertices and applying straightforward  dynamic
programming.  For consecutive vertices $v$ in the sorted order, the distance
$dist(v)$ of $v$ from the source
is set to the minimum of $dist(u)+weight(u,v)$ over all
direct ancestors $u$ of $v$, where $weight(u,v)$
stands for the weight of the edge $(u,v)$.  It takes linear (in the size of the DAG)
time.  Yen used the dynamic programming method iteratively to improve
the time complexity of Bellman-Ford algorithm for directed graphs by a
constant factor in \cite{Yen70}.  Interestingly, we can similarly
apply this method iteratively to determine shortest-path distances
among paths using almost the minimal number of edges. To formulate our
algorithm (Algorithm 1), we need the following definition and two
procedures.

\begin{definition}
A directed path from a vertex $u$ to a vertex $v$
in a directed graph is {\em lightest} if it consists
of the smallest possible number of edges.
A path from $u$ to $v$ is $t+$light if
it includes at most $t$ more edges than
a lightest path from $u$ to $v.$
\end{definition}
\par
\noindent
{\bf procedure} $SSSPDAG(G,D)$
\par
\noindent
{\em Input:} A directed graph $(V,E)$ 
with real edge weights, linearly ordered
vertices $v_1,....,v_n,$
and
a $1$-dimensional table $D$ of size $n$
with upper bounds on the distances
from $v_1$ to all vertices in $V.$
\par
\noindent
{\em Output:} Improved upper bounds on the shortest-path distances from
$v_1$ to all vertices in $V$ in the table $D.$
\par
\vskip 2pt
\noindent
{\bf for} $j=2,...,n$ {\bf do}
\par
\noindent
For each edge $(v_i,v_j)$ where $i<j$\\
$D(v_j) \leftarrow \min\{ D(v_j),D(v_i)+weight(v_i,v_j)\}$
\par
\vskip 5pt
\noindent
{\bf procedure} $reverseSSSPDAG(G,D)$
\par
\vskip 3pt
\noindent
{\em Input and output:} the same as in $SSSPDAG(G,D)$
\par
\vskip 2pt
\noindent
{\bf for} $j=n-1,...,1$ {\bf do}
  \par
  \noindent
For each edge $(v_i,v_j)$ where $i>j$\\
$D(v_j) \leftarrow \min\{ D(v_j),D(v_i)+weight(v_i,v_j)\}$
\par
\vskip 6pt
\noindent
{\bf Algorithm 1}
\par
\noindent
{\em Input:} A directed graph $(V,E)$ 
with $n$ vertices, real edge weights and a distinguished
source vertex $s$, and a positive integer $t.$
\par
\noindent
{\em Output:} Upper bounds on the shortest-path distances from
$s$ to all other vertices in $V$ not exceeding the
corresponding shortest-path
distances constrained to  $t+$light paths.
\begin{enumerate}
\item Run BFS from the source $s$.
\item Order the vertices of $G$
  extending the BFS partial order according to the levels
  of the tree, i.e., $s$ comes first,
  then the vertices reachable by direct edges from $s$,
  then the vertices reachable by paths composed of two
  edges  and so on. We may assume w.l.o.g. that all vertices
  are reachable from $s$ or alternatively extend the aforementioned
  order with the non-reachable vertices arbitrarily.
\item Initialize a $1$-dimensional
  table $D$ of size $n,$
  setting $D(v_1)\leftarrow 0$ and $D(v_j)\leftarrow\infty$ for $1<j\le n$
\item $SSSPDAG(G,D)$
\item {\bf for} $k=1,...,t $ {\bf do}
  \begin{enumerate}
\item $reverseSSSPDAG(G,D)$
\item $SSSPDAG(G,D)$
  \end{enumerate}
\end{enumerate}

\junk{
\begin{figure}
\label{fig: dag1}
\begin{center}
\includegraphics[scale=0.5]{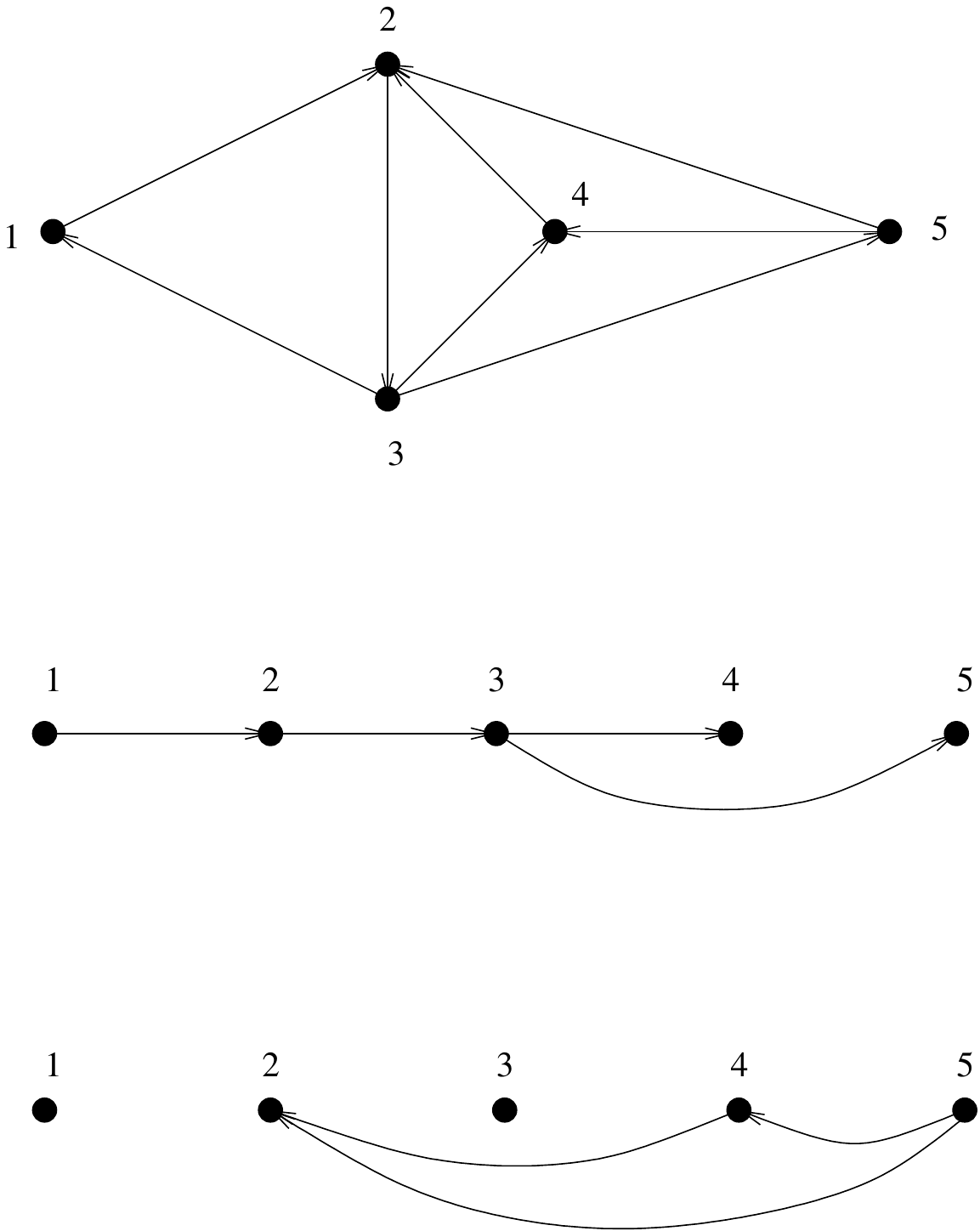}
\end{center}
\caption{An example of a graph with a BFS vertex numbering and the
two DAGs implied by forward and backward edges, respectively.}
\end{figure}}

\begin{theorem}\label{theo: light}
Let $G$ be a directed graph with $n$ vertices,
$m$ real-weighted edges, and a distinguished source vertex $s.$
For all vertices $v$ of  $G$ different from $s$,
an upper bound on their distance
from the source vertex $s$, not exceeding the length of a shortest path
among $t+$light paths from $s$ to $v,$
can be computed in $O((t+1)(m+n))$ total time.
\end{theorem}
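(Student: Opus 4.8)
The plan is to track how the table $D$ evolves across the $2t+1$ calls to the DAG procedures and to show, by induction on the number of calls, that after sufficiently many calls $D(v)$ is at most the length of any $t+$light path from $s$ to $v$. The running time is immediate: each of $SSSPDAG$ and $reverseSSSPDAG$ inspects every edge once and runs in $O(m+n)$ time, BFS and the ordering cost $O(m+n)$, and there are $2t+1$ such procedure calls, so the total is $O((t+1)(m+n))$ as claimed. So the whole content is the correctness bound.

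First I would fix the BFS-extended linear order $v_1,\dots,v_n$ from Steps 1--2, and recall its key property: for the lightest (fewest-edge) path from $s=v_1$ to any $v$, every prefix is again a lightest path to its endpoint, and since BFS levels are monotone along such a path, the vertices of a lightest path occur in increasing order in $v_1,\dots,v_n$; hence a single forward sweep $SSSPDAG$ relaxes all edges of a lightest path in order and already gives $D(v)\le \mathrm{dist}_{\mathrm{lightest}}(v)$ after Step 4 (the $0+$light case, $t=0$). Now consider an arbitrary $t+$light path $P$ from $s$ to $v$. The combinatorial heart of the argument is to decompose $P$ into maximal runs of edges that go "forward" (endpoint index increasing) and "backward" (endpoint index decreasing) in the linear order, and to bound the number of such runs. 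I would argue that because $P$ has at most $t$ more edges than a lightest path, the BFS level cannot strictly decrease along more than $t$ edges of $P$ in total (each level-decreasing or level-nonincreasing step must be "paid for" by an extra edge relative to the lightest path), and every backward edge in the linear order decreases (or fails to increase) the BFS level; consequently $P$ consists of at most $t+1$ forward runs interleaved with at most $t$ backward edges/runs — in any case $P$ is covered by at most $2t+1$ alternating monotone segments.

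Next I would run the induction: after the initial $SSSPDAG$, each of the two-procedure iterations in Step 5 performs a backward sweep followed by a forward sweep. A forward sweep correctly "pushes" a distance value along any increasing-index segment of $P$ in one pass (relaxing the edges of that segment in the order they are visited), and a backward sweep does the same for any decreasing-index segment. Using the decomposition of $P$ into at most $2t+1$ alternating monotone segments, I would show by induction on the segment index that after processing the $i$-th segment the value $D$ at the far endpoint of that segment is at most the length of the prefix of $P$ up to that endpoint; since the sequence of sweeps performed by Algorithm 1 is $SSSPDAG,\ (reverseSSSPDAG,SSSPDAG)^t$, it contains, as a subsequence, enough alternating forward/backward sweeps to cover all $2t+1$ segments of $P$ in order. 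Hence at termination $D(v)$ is at most the length of $P$, and taking the minimum over all $t+$light paths $P$ gives the claimed bound. The main obstacle is the combinatorial claim bounding the number of monotone segments of a $t+$light path by $2t+1$ (equivalently, that at most $t$ edges of $P$ are level-nonincreasing in the BFS order): this requires carefully relating "extra edges relative to the lightest path" to "level-nonincreasing steps", using that each non-advancing step along $P$ forces the path to later re-traverse a BFS level it has already reached, costing at least one extra edge over the lightest path. Once that claim is in hand, the induction over sweeps and segments is routine.
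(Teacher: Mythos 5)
Your proposal is correct and takes essentially the same route as the paper's own proof: both rest on the observation that a path must contain at least $\ell$ forward (level-increasing) edges to reach BFS level $\ell$, so a $t+$light path has at most $t$ backward edges and hence decomposes into at most $2t+1$ alternating monotone runs, each of which is handled by one of the $2t+1$ sweeps $SSSPDAG,(reverseSSSPDAG,SSSPDAG)^t$. The only difference is cosmetic: you spell out the segment-by-segment induction on the table $D$ that the paper leaves implicit.
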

\begin{proof}
  Consider Algorithm 1 and in particular
  the ordering of the vertices specified in its second step.
  We shall refer to an edge $(v_i,v_j)$ as forward if
  $i<j$ otherwise we shall call it backward.
  Note that the vertices at the same level of the BFS tree
  can be connected both by forward as well as backward edges.
  See also Fig. 1.
  Let $\ell$ be the number of (forward) edges
  in a lightest path from $s$
  to a given vertex $v.$
  It follows that any path from $s$ to $v$,
  in particular a shortest $t+$light one,
  has to have at least $\ell$ forward edges.

  Consider the BFS tree from the source $s.$
  Define the level of a vertex in the tree as
  the number of edges on the path from $s$
  to the vertex in the tree. Thus, in particular,
  $level(s)=0$ while $level(v)=\ell.$
  Recall that the linear order
  extending the partial BFS order used in
  Algorithm 1 is non-decreasing with respect of
  the levels of vertices. Also, if $(u,w)$
  is a forward edge then $level(u)\le level(w)\le level(u)+1$
  and if $(u,w)$ is a backward edge then $level(u)\ge level(w).$
  Hence, any path from $s$ to $v$ has to have at least
  $\ell$ forward edges, each increasing the level by one.
  
Consequently, a shortest  $t+$light path from $s$ to $v$
can have at most $t$ backward edges.
Thus, it can be decomposed into at most $2t+1$ maximal
fragments of consecutive edges of the same type (i.e.,
forward or backward, respectively), where the even
numbered fragments consist of backward edges.
Thus, the at most $2t+1$
calls of the procedures
$SSSPDAG(G,s,D)$,\\
$reverseSSSPDAG(G,s,D)$ in the algorithm
are sufficient to detect a distance from $s$ to $v$
not exceeding the length of a shortest path 
among $t+$light paths from $s$ to $v.$
The asymptotic running time of the algorithm is dominated
by the aforementioned procedure calls.
Hence, it is $O((t+1)(m+n))$.
\end{proof}

\begin{figure}
\label{fig: dag1}
\begin{center}
\includegraphics[scale=0.5]{dag1}
\end{center}
\caption{An example of a graph with a BFS vertex numbering and the
two DAGs implied by forward and backward edges, respectively.}
\end{figure}

We can obtain a representation of directed paths achieving the upper
bounds on the distances from the source provided in Theorem \ref{theo:
  light} in a form of a tree of paths emanating from the source by
backtracking.  By setting $t=n-2$ in this theorem, we can match the
best known SSSP algorithm for directed graphs with positive and
negative real edge weights, i.e., the Bellman-Ford algorithm and
its constant factor improvements \cite{sur17,Z01}, running in $O(nm)$
time.  Similarly as in the case of Bellman-Ford algorithm, by calling
additionally $reverseSSSPDAG(G,D)$ and $SSSPDAG(G,D)$ after the
last iteration in Algorithm 1, we can detect the existence of negative
cycles.

Comparing our algorithm with the Bellman-Ford one, note that if the
lightest path from the source to a vertex $v$ has $\ell$ edges then
$\ell +t$ iterations in the Bellman-Ford algorithm may be needed to
obtain an upper bound on the distance of $v$ from the source comparable to
that obtained after $O(t)$ iterations in Algorithm 1.

\junk{
\begin{theorem}
The SSSP problem for directed graphs with $n$ vertices,
$m$ edges and non-negative edge weights
of $O(1)$ value can be solved in $O(nm)$ time.
\end{theorem}
\begin{proof}
Use Dijkstra's' algorithm with a $1$-dimensional table of size $O(n)$
instead of a priority queue.
\end{proof}}
\section{An output-sensitive  APSP algorithm for DAGs}

The APSP problem in DAGs with both positive and negative real edge
weights can be solved in $O(n(n+m))$ time by running $n$ times the SSSP
algorithm for DAGs. It is an intriguing open problem if there exist
substantially more efficient algorithms for APSP in edge-weighted
DAGs. In this section, we make a progress on this question by
providing an output-sensitive algorithm for this problem.
Its running time depends on the structure of shortest path
trees. Although in the worst-case it does not break the $O(nm)$
barrier it seems to be substantially more efficient in the majority of
cases.

The standard algorithm for APSP for DAGs just runs the SSSP
algorithm for DAGs for
each vertex of the DAG as a source separately. Our APSP algorithm  does
everything in one sweep along the topologically sorted order. Its main
idea is for each vertex $v$ to compute the tree of lexicographically-first
shortest paths from the ancestors $u$ of the currently processed
vertex $v$ to $v$, in the topologically sorted order. In case the tree
of lexicographically-first shortest paths from the already considered
ancestors of $v$ includes $u$ (as some intermediate vertex) then we
are done as for $u.$ Otherwise, we have to find the direct ancestor of
$v$ on the lexicographically-first shortest path $P$ from $u$ to $v$
and add an initial fragment of $P$ to the tree. By the topologically
sorted order in which the ancestors $u$ of $v$ are considered, this
can happen only when $u$ is a leaf of the (final) tree of
lexicographically-first shortest paths from the ancestors of $v$ to $v.$ The
direct ancestor of $v$ on $P$ can be found by comparing the lengths of shortest
paths from $u$ to $v$ with different direct ancestors of $v$ as the
next to the last vertex on the paths in time proportional to the
indegree of $v.$ In turn, the initial fragment of $P$
to add can be found by using the
link to the lexicographically-first shortest path from $u$ to the direct
ancestor of $v$ that is on $P.$ The correctness of the algorithm is
immediate.  The issues are an implementation of these steps and an
estimation of the running time.

To specify our
output-sensitive algorithm (Algorithm 2) more
exactly, we need the following definition.

\begin{definition}
  Assume a numbering of vertices in an edge-weighted DAG
  extending the topological partial order.
  A shortest (directed) path $P$ from $v_k$ to $v_i$ in the DAG
  is {\em first in a lexicographic order} if the direct
  ancestor $v_j$ of $v_i$ on $P$ has the lowest number $j$
  among all direct ancestors of $v_i$ on shortest
  paths from $v_k$ to $v_i$  and the subpath of $P$
  from $v_k$ to $v_j$ is  the lexicographically-first
  shortest path from $v_k$ to $v_j.$ For a vertex $v_i$ in the DAG,
  the tree $T_{v_i}$ of (lexicographically-first) shortest paths
  is the union of lexicographically-first paths from
  all ancestors of $v_i$ to $v_i.$ Note that the vertex $v_i$ is a sink
  of $T_{v_i}.$ It is assumed to be the root of $T_{v_i}$ and
  $\text{leaf}(T_{v_i})$ stands for the set of leaves of $T_{v_i}.$
 \end{definition}
 \par
\vskip 4pt
\noindent
{\bf Algorithm 2}
\par
\noindent
{\em Input:} A DAG  $(V,E)$ 
with real edge weights.

\par
\noindent
{\em Output:} For each vertex $v\in V,$
the tree $T_v$ of lexicographically-first shortest paths from all 
ancestors of $v$ to $v$ given by the table $NEXT_v$, where for
each ancestor $u$ of $v$, $NEXT_v(u)$ is the direct
successor of $u$ in the tree $T_v,$ (i.e., the head of the unique
directed edge having $u$ as the tail in the tree).
\begin{enumerate}
\item
  Determine the source vertices,
topologically sort the remaining vertices in $V$,
and number the vertices in $V$ accordingly,
assigning to the sources the lowest numbers.
\item Set $n$ to $|V|$ and $r$ to the number
  of sources in $G.$
\item Initialize an $n\times n$ table
  $dist$ by setting $dist(u,u)=0$ 
  and $dist(u,v)=\infty$ for $u,\ v \in V, \ u\neq v.$
\item {\bf for} $i=r+1,...,n$ {\bf do}
\begin{enumerate}
\item Compute
  the set $A(v_i)$ of ancestors of $v_i.$
\item Initialize a $1$-dimensional table $NEXT_{v_i}$ of size $|A(v_i)|$,\\
  setting $NEXT{v_i}(v_j)$ to $0$ for $v_j\in A(v_i)$.
\item {\bf for} $v_k\in A(v_i)$ in increasing order of the index $k$ {\bf do}
\begin{enumerate}
\item {\bf if} $NEXT_{v_i}(v_k)\neq 0$ {\bf then}
proceed to the next iteration of the interior for block.
\item Determine a direct ancestor $v_j$ of $v_i$ that
minimizes the value of $dist(v_k, v_j)+ weight(v_j,v_i)$.
In case of ties the vertex $v_j$ with the smallest index $j$ is
chosen among those yielding the minimum.
\item
$v_{current}\leftarrow v_k$
\item
  {\bf while} $v_{current}\neq v_j$ $\&$  $NEXT(v_{current}, v_i)=0$ {\bf do}\\
  $dist(v_{current}, v_i)\leftarrow dist(v_{current}, v_j) + weight(v_j, v_i$)\\
 $NEXT_{v_i}(v_{current})\leftarrow NEXT_{v_j}(v_{current})$\\
  $v_{current}\leftarrow NEXT_{v_i}(v_{current})$
\item
  {\bf if}  $NEXT_{v_i}(v_j)=0$ {\bf then} $dist(v_j, v_i)\leftarrow weigh(v_j, v_i)$ $\&$
  $NEXT_{v_i}(v_j)\leftarrow v_i$
\end{enumerate}
\end{enumerate}
\end{enumerate}

\begin{lemma}\label{lem: path}
  Steps 4.c.iii-v add the missing fragments of a lexicographically
  shortest path from $v_k$ to $v_i$ and set
  the distances from vertices in the fragments to $v_i$ in time proportional
to the number of vertices added to $T_{v_i}$.
\end{lemma}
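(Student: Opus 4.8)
The plan is to prove the lemma inside a single execution of the inner \textbf{for} loop (Step 4.c) for a fixed ancestor $v_k$, relying on an outer induction on the main-loop index $i$ and an inner induction on $k$. The outer induction hypothesis is that for every $j<i$ the tables $NEXT_{v_j}$ and $dist(\cdot,v_j)$ already encode the tree $T_{v_j}$ of lexicographically-first shortest paths to $v_j$ and the correct lengths of those paths; the inner induction hypothesis is that, at the moment $v_k$ is about to be processed, the part of $T_{v_i}$ assembled during the processing of the earlier ancestors $v_{k'}$ with $k'<k$ correctly records, for each already inserted vertex $w$, both the successor of $w$ on the lexicographically-first shortest $w$--$v_i$ path and the length $dist(w,v_i)$ of that path. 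The key auxiliary fact I would establish first is an optimal-substructure property for lexicographically-first shortest paths: if $P$ is the lexicographically-first shortest path from $v_k$ to $v_i$ and $w$ lies on $P$, then the suffix of $P$ from $w$ to $v_i$ is the lexicographically-first shortest path from $w$ to $v_i$ (and in particular such paths are unique). This is proved by induction on the number of edges of the suffix: a strictly shorter $w$--$v_i$ path, or an equally short one whose penultimate vertex has a smaller index, could be spliced onto the $v_k$--$w$ prefix of $P$ — the splice is a simple path because in a DAG the prefix vertices are ancestors of $w$ while the new suffix vertices are descendants of $w$ — contradicting the defining minimality of $P$, and an equally short suffix with the same penultimate vertex reduces to the inductive case.

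Next I would verify correctness of Steps 4.c.ii--v. Since control reaches them we have $NEXT_{v_i}(v_k)=0$. Because $v_k$ is an ancestor of $v_i$, the quantity $dist(v_k,v_j)+weight(v_j,v_i)$ is finite for at least one direct ancestor $v_j$ of $v_i$, so the $v_j$ selected in Step 4.c.ii satisfies $j<i$ and either $v_k=v_j$ or $v_k\in A(v_j)$; hence $dist(v_k,v_j)$ is, by the outer induction hypothesis, the length of the lexicographically-first shortest $v_k$--$v_j$ path, and $dist(v_k,v_j)+weight(v_j,v_i)$ is the length of a shortest $v_k$--$v_i$ path having $v_j$ as penultimate vertex. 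Thus, by the definition of lexicographically-first shortest paths, the chosen $v_j$ is precisely the penultimate vertex of the lexicographically-first shortest path $P$ from $v_k$ to $v_i$, and the $v_k$--$v_j$ prefix of $P$ is exactly the pointer chain $v_k,\;NEXT_{v_j}(v_k),\;NEXT_{v_j}(NEXT_{v_j}(v_k)),\dots,v_j$ recorded in $T_{v_j}$. The \textbf{while} loop walks this chain from $v_{current}=v_k$; at each visited vertex $w=v_{current}$ with $w\neq v_j$ and $NEXT_{v_i}(w)=0$ it sets $NEXT_{v_i}(w)\leftarrow NEXT_{v_j}(w)$, which by the substructure property is the successor of $w$ on the lexicographically-first shortest $w$--$v_i$ path, and $dist(w,v_i)\leftarrow dist(w,v_j)+weight(v_j,v_i)$, which is the length of that path. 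The loop halts at $v_j$ — in which case Step 4.c.v attaches $v_j$ to $v_i$ if $v_j$ is not already present — or at the first vertex $w$ with $NEXT_{v_i}(w)\neq 0$; in the latter case the substructure property and uniqueness force the $w$--$v_i$ suffix of $P$ to coincide with the $w$--$v_i$ path already present in $T_{v_i}$, so no further vertices need be added (and $v_j$, lying on that suffix, already has $NEXT_{v_i}(v_j)\neq 0$, so Step 4.c.v does nothing). Therefore Steps 4.c.iii--v add exactly the vertices of $P$ strictly preceding its first vertex already in $T_{v_i}$, together with the correct $NEXT_{v_i}$ pointers and $dist(\cdot,v_i)$ values, which is the claimed behaviour.

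For the running time, Step 4.c.iii is a single assignment; each iteration of the \textbf{while} loop performs a constant number of table look-ups and assignments and turns exactly one entry $NEXT_{v_i}(v_{current})$ from $0$ into a nonzero value, i.e.\ inserts exactly one new vertex into $T_{v_i}$; and Step 4.c.v does $O(1)$ work while inserting at most the single vertex $v_j$. Since $NEXT_{v_i}(v_k)=0$ on entry, at least one vertex ($v_k$ itself) is inserted, so if $q$ denotes the number of vertices inserted into $T_{v_i}$ by these steps, their total cost is $O(1+q)=O(q)$, as claimed. The main obstacle is the optimal-substructure and uniqueness argument for lexicographically-first shortest paths: it is exactly what licenses copying the $NEXT_{v_j}$ pointers verbatim onto the new path and halting the walk the instant an already-inserted vertex is encountered, and it must be stated carefully to interact correctly with the lexicographic tie-breaking in the definition.
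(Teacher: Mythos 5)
Your proof is correct and follows essentially the same route as the paper's: walk the $NEXT_{v_j}$ pointer chain from $v_k$ toward $v_j$, stop at the first vertex already in $T_{v_i}$, and justify the early stop by the uniqueness of lexicographically-first shortest paths under the tie-breaking rule of Step 4.c.ii, with each loop iteration inserting exactly one new vertex. You merely make explicit, via the stated optimal-substructure lemma and the double induction on $i$ and $k$, what the paper compresses into its remark about encoding the two path fragments by reversed index sequences and observing that both must be lexicographically first.
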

\begin{proof}
Follow the path from $v_k$ to $v_j$ in $T_{v_j}$
extended by $(v_j,v_i)$ until
a vertex $v_q\in T_{v_i}$ is encountered.
This is done in Steps 4.c.iii-v.
The membership of $v_{current}$ in $T_{v_i}$
is verified by checking whether or not
$NEXT_{v_i}(v_{current})=0.$
Also, if $v_{current}$ is not yet in $T_{v_i}$
then its distance to $v_i$ is set
by  $dist(v_{current}, v_i)\leftarrow dist(v_{current}, v_j) + weight(v_j, v_i)$
and it is added to $T_{v_i}$ by
$NEXT_{v_i}(v_{current})\leftarrow NEXT_{v_j}(v_{current})$
in Step 4.c.iv.
By the inclusion of $v_q$ in $T_{v_i}$,
a whole shortest path $Q$ from $v_q$ to $v_i$
is already included in $T_{v_i}$ by induction
on the number of steps performed by the algorithm.
We claim that $Q$ exactly overlaps with
the final fragment of the extended path starting from $v_q$.
To see this encode $Q$ and the aforementioned
fragment of the extended path
by the indices of their vertices in the reverse
order. By our rule of resolving ties in Step 4.c.ii
both encodings should be first in the
lexicographic order so we have an exact overlap.
For this reason, it is sufficient to add
the initial fragment of the extended path
ending at $v_q$ to $T_{v_i}$ and if necessary
also the edge $(v_j,v_i)$ to $T_{v_i},$ and to update
the distances from vertices in the added fragment to $v_i,$
i.e., to perform Steps 4.c.iii-v.
\end{proof}

\begin{theorem} \label{theo: main}
The APSP algorithm for a
DAG $(V,E)$ with $n$ vertices, $m$ edges and real edge  weights
(Algorithm 2) runs in time
$O(\min \{n^{\omega}, nm+n^2\log n\}+\sum_{v\in V}\text{indeg}(v)|\text{leaf}(T_v)|).$
\end{theorem}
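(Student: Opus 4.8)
The plan is to read Theorem~\ref{theo: main} as a running‑time analysis only, since correctness of Algorithm~2 is already argued in the text preceding it and in Lemma~\ref{lem: path}. I would split the cost into a \emph{dense preprocessing} part --- Steps~1--3 together with all executions of Step~4.a --- and the work done inside the inner loop Step~4.c. For the preprocessing: the topological sort and numbering cost $O(n+m)$; initializing the $n\times n$ array $dist$ costs $O(n^2)$; and producing all ancestor sets $A(v_i)$ is exactly computing the transitive closure of the DAG, which is $O(n^{\omega})$ via Boolean matrix multiplication, or $O(nm+n^2)$ by sweeping the vertices in topological order and accumulating $A(v_i)=\bigcup_{(v_j,v_i)\in E}(\{v_j\}\cup A(v_j))$ (equivalently, one backward search per vertex), with a further $O(n)$ per vertex to list each $A(v_i)$ in increasing index order. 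Since $\omega\ge 2$ gives $n^2=O(n^{\omega})$ and trivially $n^2=O(nm+n^2\log n)$, all of this fits in $O(\min\{n^{\omega},\,nm+n^2\log n\})$, the $\log n$ being slack.

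For the inner loop I would fix a target $v_i$ and look at the iterations of Step~4.c over the ancestors $v_k\in A(v_i)$ in increasing order of $k$, calling an iteration \emph{cheap} if Step~4.c.i exits it at once (because $NEXT_{v_i}(v_k)\ne 0$) and \emph{expensive} otherwise. A cheap iteration is $O(1)$ and there are at most $|A(v_i)|\le n$ of them. An expensive iteration runs Step~4.c.ii at cost $O(\text{indeg}(v_i))$ (a minimum of $O(1)$ table look‑ups over the direct predecessors of $v_i$) and then Steps~4.c.iii--v at cost $O(1)$ plus the number of vertices freshly inserted into $T_{v_i}$, by Lemma~\ref{lem: path}. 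Since $NEXT_{v_i}(\cdot)$ is overwritten only while still $0$, no vertex is inserted twice and every ancestor of $v_i$ is eventually inserted, so across the whole inner loop for $v_i$ the insertion counts sum to $|A(v_i)|\le n$; hence Steps~4.c.iii--v together contribute only $O(n)$.

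The crux --- and the only step I expect to need real care --- is to show that an expensive iteration can occur only for $v_k\in\text{leaf}(T_{v_i})$, so that Step~4.c.ii is executed at most $|\text{leaf}(T_{v_i})|$ times for target $v_i$. I would argue by contradiction: suppose the iteration for $v_k$ is expensive but $v_k$ is an internal vertex of the final $T_{v_i}$. Then $v_k$ is an intermediate vertex of the lexicographically‑first shortest path $P'$ from some ancestor $v_{k'}$ of $v_i$ to $v_i$, so $v_{k'}$ is an ancestor of $v_k$, hence $k'<k$ in the topological numbering and the iteration for $v_{k'}$ ran earlier. When $v_{k'}$ was processed, the direct predecessor $v_j$ chosen in Step~4.c.ii (which satisfies $v_{k'}\in A(v_j)$, because $v_{k'}\in A(v_i)$ makes the minimized quantity finite) together with the path from $v_{k'}$ to $v_j$ inside $T_{v_j}$ walked in Steps~4.c.iii--v, extended by $(v_j,v_i)$, is precisely $P'$ (by the least‑index tie‑break and the inductive correctness of $T_{v_j}$); then by Lemma~\ref{lem: path}, including its exact‑overlap argument, all of $P'$ --- its already‑present suffix and its freshly inserted prefix --- belongs to $T_{v_i}$. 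So $NEXT_{v_i}(v_k)\ne 0$ by the time $v_k$ is reached and its iteration is cheap, a contradiction.

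Finally I would assemble the pieces: for each target $v_i$ the inner loop costs $O(n+\text{indeg}(v_i)\,|\text{leaf}(T_{v_i})|)$; summing over the at most $n$ targets gives $O(n^2+\sum_{v\in V}\text{indeg}(v)\,|\text{leaf}(T_v)|)$ for Step~4, and adding the preprocessing and absorbing $n^2$ into $\min\{n^{\omega},\,nm+n^2\log n\}$ yields the claimed bound
\[
O\left(\min\{n^{\omega},\,nm+n^2\log n\}+\sum_{v\in V}\text{indeg}(v)\,|\text{leaf}(T_v)|\right).
\]
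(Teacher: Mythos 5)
Your proposal is correct and follows essentially the same route as the paper's proof: preprocessing via transitive closure within $O(\min\{n^{\omega}, nm+n^2\log n\})$, charging the $O(\text{indeg}(v_i))$ cost of Step~4.c.ii only to leaves of $T_{v_i}$ by arguing that any internal vertex of the final tree is already present when its turn comes (because some earlier-numbered ancestor's path through it has already been installed), and amortizing Steps~4.c.iii--v over insertions via Lemma~\ref{lem: path}. The only (harmless) divergence is that you obtain the index-sorted ancestor lists without the $O(n^2\log n)$ sorting step the paper allows for in the BFS-based variant.
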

\begin{proof}
  The sets of ancestors can be determined in Step 4.a
  by computing the transitive closure of the input
  DAG in $O(\min \{n^{\omega}, nm\})$ time by using fast matrix multiplication \cite{Mu71}
  or BFS \cite{CLR}, first.
  In fact, to implement the loop in Step4.c,
  we need the sets of ancestors to be ordered according to the numbering
  of vertices provided in Step 1. If the transitive closure matrix
  is computed such an ordered set of ancestors  can be easily
  retrieved in $O(n)$ time. Otherwise, additional
  preprocessing sorting the unordered sets of ancestors is needed.
  The total cost of the additional preprocessing is $O(n^2\log n).$
  
  All the remaining steps, excluding Steps 4.c.ii-v for vertices
  $v_k$ not yet in $T_{v_i}$,
  can be done 
  in total (i.e., over all iterations) time
  $O(\sum_{v\in V}(1+|A(v)|))=O(n^2),$ where $A(v)$
  stands for the set of ancestors of $v$ in the DAG.
  The time taken by Step 4.c.ii,
when $v_k$ is not yet in the current $T_{v_i}$, is $O(\text{indeg}(v_i)).$
Suppose that $v_k$ is not a leaf of the final tree $T_{v_i}$.
Then, there must exist some leaf $v_p$ of the final tree
such that there is
path from $v_p$ via $v_k$ to $v_i$ in this tree.
By the numbering of vertices extending the partial topological
order, we have $p< k.$ We infer that the aforementioned
path is already present in the current $T_{v_i}$.
Thus, in particular the vertex $v_k$ is in the current tree.
Hence, the total time taken
by Step 4.c.ii is $O(\sum_{v\in V}\text{indeg}(v)|\text{leaf}(T_{v})|).$
Finally,
the total time taken by Steps 4.c.iii-v is $O(\sum_{v\in V}(1+|A(v)|))$
by Lemma \ref{lem: path}.
\end{proof}

Note that
the following inequalities hold:
$$\sum_{v\in V}\text{indeg}(v)|\text{leaf}(T_v)|\le m\max_{v\in V} |\text{leaf}(T_v)|,$$
$$\sum_{v\in V}\text{indeg}(v)|\text{leaf}(T_v)|\le n^2\frac {\sum_{v\in V} |\text{leaf}(T_v)|} n.$$
They immediately yield the following corollary from Theorem \ref{theo: main}.

\begin{corollary}
  Let $G=(V,E)$ be an $n$-vertex
  DAG with $n$ vertices and $m$ edges
  with real edge  weights.
  Suppose $\max_{v\in V} |\text{leaf}(T_v)|=O(n^{\alpha})$
  and\\
  $\frac {\sum_{v\in V} |\text{leaf}(T_v)|} n=O(n^{\beta})$.
The APSP problem for $G$  is solved by
Algorithm 2 in time
  $O(\min \{n^{\omega}, nm+n^2\log n\}+\min \{ mn^{\alpha},n^{2+\beta}\}).$
\end{corollary}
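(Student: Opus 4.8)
The plan is to obtain the Corollary directly from Theorem~\ref{theo: main} by bounding the output-sensitive term $\sum_{v\in V}\text{indeg}(v)|\text{leaf}(T_v)|$ in two complementary ways and then keeping whichever estimate is smaller. Since Algorithm~2 is exactly the algorithm analyzed in Theorem~\ref{theo: main}, its running time is $O(\min\{n^{\omega},nm+n^2\log n\})$ plus that term, so it suffices to control the term alone.

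First I would use that every edge contributes exactly one unit to the indegree of its head, so $\sum_{v\in V}\text{indeg}(v)=m$. Factoring out the largest number of leaves gives
$$\sum_{v\in V}\text{indeg}(v)|\text{leaf}(T_v)|\ \le\ \Bigl(\max_{v\in V}|\text{leaf}(T_v)|\Bigr)\sum_{v\in V}\text{indeg}(v)\ =\ m\max_{v\in V}|\text{leaf}(T_v)|\ =\ O(mn^{\alpha}),$$
invoking the hypothesis $\max_{v\in V}|\text{leaf}(T_v)|=O(n^{\alpha})$. Next I would instead bound each indegree crudely by $\text{indeg}(v)\le n$, which yields
$$\sum_{v\in V}\text{indeg}(v)|\text{leaf}(T_v)|\ \le\ n\sum_{v\in V}|\text{leaf}(T_v)|\ =\ n^2\cdot\frac{\sum_{v\in V}|\text{leaf}(T_v)|}{n}\ =\ O(n^{2+\beta}),$$
using $\frac{\sum_{v\in V}|\text{leaf}(T_v)|}{n}=O(n^{\beta})$. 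These are precisely the two inequalities stated just before the Corollary; combining them with Theorem~\ref{theo: main} gives running time $O(\min\{n^{\omega},nm+n^2\log n\}+\min\{mn^{\alpha},n^{2+\beta}\})$.

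There is no real obstacle here; the only point I would be careful to state cleanly is that both estimates apply to the \emph{same} single execution of Algorithm~2 — the algorithm performs no case distinction on $\alpha$ or $\beta$ — so the correct conclusion is that its worst-case running time is simultaneously at most each of the two bounds, hence at most their minimum. Beyond invoking Theorem~\ref{theo: main} and these two one-line inequalities, no further algorithmic analysis is needed.
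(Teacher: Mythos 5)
Your proposal is correct and matches the paper's own argument: the paper states exactly the two inequalities $\sum_{v\in V}\text{indeg}(v)|\text{leaf}(T_v)|\le m\max_{v\in V}|\text{leaf}(T_v)|$ and $\sum_{v\in V}\text{indeg}(v)|\text{leaf}(T_v)|\le n^2\frac{\sum_{v\in V}|\text{leaf}(T_v)|}{n}$ immediately before the corollary and notes that they yield it from Theorem~\ref{theo: main}. Your additional remark that both bounds apply to the same single execution of Algorithm~2 is a sensible clarification but not a departure from the paper.
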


Observe that $ |\text{leaf}(T_v)|$ is equal to the minimum
number of directed paths covering the tree $T_v.$
Hence, $\alpha < 1$ if the maximum of the minimum number
of paths covering $T_v$ over $v$ is substantially sublinear.
Similarly, $\beta < 1$ if the average of the minimum number
of paths covering $T_v$ over $v$ is substantially sublinear.

To illustrate the superiority of Algorithm 2 over
the standard $O(n(n+m))$-time method for APSP in DAGs, consider
the following simple, extreme example.
\par
\noindent
Suppose $M$ is a positive integer.
Let $D$ be a DAG with vertices $v_1, v_2$,...,$v_n,$
and edges $(v_i,v_j),$ where $i<j,$ such that
the weight of $(v_i,v_j)$ is $-1$ if $j=i+1$ and $M$ otherwise.
\par
\noindent
It is easy to see the tree $T_{v_i}$ is just the path
$v_1,\ v_2,...,v_i$ and hence $|\text{leaf}(T_{v_i})|=1.$
Consequently, Algorithm 2 on the DAG $D$ runs in $O(n^{\omega})$
time while the standard method requires $O(n^3)$ time.
If $M=1,$
one could also run Zwick's APSP algorithm for directed
graphs with edge weights in $\{-1,0,1\}$ on this example
in $O(n^{2.575})$ time \cite{Z02}.

To refine Theorem \ref{theo: main}, we need the following
definition.

\begin{definition}
  For an edge weighted DAG $G$, let
  $\tilde{G}$ be the edge weighted DAG resulting from
  reversing the direction of edges in $G.$
  For a vertex $v_i$ in the DAG $G$,
  the tree $U_{v_i}$ of (lexicographically-first in reversed order) shortest paths from $v_i$
  to all descendants of $v_i$ in $G$ is just the tree
  resulting from the tree $T_{v_i}$ in $\tilde{G}$
  by reversing the edge directions.
  Note that the vertex $v_i$ is a source
  of $U_{v_i}.$ It is assumed to be the root of $U_{v_i}$ and
  $\text{leaf}(U_{v_i})$ stands for the set of leaves of $U_{v_i}.$
\end{definition}

The APSP for edge weighted DAGs can be solved
by providing the trees $U_v$ instead of the trees $T_v.$
Hence, we obtain immediately the following strengthening
of Theorem \ref{theo: main} by symmetry.

\begin{theorem} \label{theo: gmain}
The APSP problem for a
DAG $(V,E)$ with $n$ vertices, $m$ edges and real edge  weights
can be solved in time
$$O(\min \{n^{\omega}, nm+n^2\log n\}+
\min \{ \sum_{v\in V}\text{indeg}(v)|\text{leaf}(T_v),
\sum_{v\in V}\text{outdeg}(v)|\text{leaf}(U_v)\}
).$$
\end{theorem}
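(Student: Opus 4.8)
The plan is to derive the bound from Theorem~\ref{theo: main} by invoking the ancestor/descendant duality and then combining the two resulting algorithms by dovetailing, so that the total running time is governed by the \emph{smaller} of the two data-dependent sums.

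First I would note that running Algorithm~2 on the reversed DAG $\tilde G$ computes, for every vertex $v$, the tree of lexicographically-first shortest paths from all ancestors of $v$ to $v$ in $\tilde G$; reversing the directions of its edges yields precisely $U_v$ as in the definition preceding the theorem. Since $\tilde G$ has the same $n$ and $m$ as $G$, a topological numbering of $\tilde G$ is obtained by reversing one of $G$ (so Step~1 is no more costly), $\text{indeg}_{\tilde G}(v)=\text{outdeg}_G(v)$, and a leaf of the tree of $\tilde G$ rooted at $v$ is a leaf of $U_v$ in $G$, Theorem~\ref{theo: main} applied to $\tilde G$ shows this run takes $O(\min\{n^{\omega},nm+n^2\log n\}+\sum_{v\in V}\text{outdeg}(v)|\text{leaf}(U_v)|)$ time. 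The family $\{U_v\}_{v\in V}$ is a valid APSP representation for the same reason that $\{T_v\}_{v\in V}$ is: for distinct $u,w$, if $w$ is reachable from $u$ then $w$ is a descendant of $u$, so $U_u$ stores a shortest path from $u$ to $w$ together with $dist(u,w)$; if $w$ is unreachable from $u$ the distance is $\infty$; and $dist(u,u)=0$. Hence both Algorithm~2 on $G$ and Algorithm~2 on $\tilde G$ solve APSP, with running times dominated by $\sum_{v}\text{indeg}(v)|\text{leaf}(T_v)|$ and $\sum_{v}\text{outdeg}(v)|\text{leaf}(U_v)|$, respectively, on top of a common $O(\min\{n^{\omega},nm+n^2\log n\})$ preprocessing cost.

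Since we cannot tell in advance which of these two sums is smaller, I would run the two executions interleaved, advancing each by one elementary operation in alternation and halting as soon as either one completes, returning its output. If the individual running times are $f_1$ and $f_2$, the dovetailed computation halts within $2\min\{f_1,f_2\}+O(1)$ steps, which is $O(\min\{n^{\omega},nm+n^2\log n\}+\min\{\sum_{v}\text{indeg}(v)|\text{leaf}(T_v)|,\sum_{v}\text{outdeg}(v)|\text{leaf}(U_v)|\})$, as claimed. The one place that needs a little care is this final interleaving step --- verifying that termination of either branch is detected online without asymptotic overhead and that the (at most doubled, and in fact shareable via transposition of the transitive closure) preprocessing is still absorbed into the stated bound; the rest follows immediately from Theorem~\ref{theo: main} and the symmetry between $G$ and $\tilde G$.
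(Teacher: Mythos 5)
Your proposal is correct and matches the paper's own argument: the paper likewise alternates (dovetails) the steps of Algorithm~2 on $G$ with those of Algorithm~2 on $\tilde G$, stops when either run finishes, recovers the trees $U_v$ by reversing edge directions if the $\tilde G$ run wins, and appeals to Theorem~\ref{theo: main} together with $\text{indeg}_{\tilde G}(v)=\text{outdeg}_G(v)$. Your additional remarks on sharing the preprocessing and on why $\{U_v\}$ is a valid APSP representation are fine elaborations of the same route.
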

\begin{proof}
  Alternate the steps of Algorithm 2 run on the input DAG $G$
  with those of Algorithm 2 run on the DAG $\tilde{G}$.
  When any of the two runs finishes we are basically
  done. In case the run of Algorithm 2 on the DAG $\tilde{G}$
  finishes first, we obtain the trees $U_v$ in $G$
  from the trees $T_v$ in $\tilde{G}$ by
  reversing the direction of edges.
  The upper time bound follows from Theorem \ref{theo: main}
  and the fact that the indegree of a vertex in 
  $\tilde{G}$ is equal to its outdegree in $G.$
  \end{proof}
\junk{
\par
\vskip 4pt
\noindent
{\bf Algorithm 3}
\par
\noindent
{\em Input:} A DAG  $(V,E)$ 
with real edge weights.

\par
\noindent
{\em Output:} For each vertex $v\in V,$
the tree $T_v$ of lexicographically first shortest paths from all 
ancestors of $v$ to $v.$
\noindent
\begin{enumerate}
\item Topologically sort the vertices in $V$
and number the vertices in $V$ accordingly.
\item Set $n$ to $|V|$ and $r$ to the number
  of sources in $G.$
\item Initialize an $n\times n$ table
  $dist$ by setting $dist(u,u)=0$ 
  and $dist(u,v)=\infty$ for $u,\ v \in V, \ u\neq v.$
\item Set $L$ to an empty set.
\item {\bf for} $i=r+1,...,n$ {\bf do}
\begin{enumerate}
\item Compute
  the set $A(v_i)$ of ancestors of $v_i.$
\item Initialize a $1$-dimensional table $L_{v_i}$ of size $|A(v_i)|$,
  setting $L_{v_i}(v_j)$ to an empty link for $v_j\in A(v_i)$.
\item Set $T_{v_i}$ to the singleton tree $v_i$.
\item {\bf for} $v_k\in A(v_i)$ in increasing order of the index $k$ {\bf do}
\begin{enumerate}
\item {\bf if} $v_k$ is in $T_{v_i}$ {\bf then}
  proceed to the next iteration of the interior for block.
  \itrm  {\bf if} $v_k$ is not in $L$ {\bf then}
solve the SSSP problem with the source $v_k$
  on the sub-DAG induced by vertices $v_k$ through $v_n$,
  in particular computing $T^{v_k}$ and updating
  the entries $dist(V_k,*).$
\item Set $v_j$ to the direct ancestors of $v_j$
  in $T^{v_k}$.
\item Using the link $L_{V_j}(v_k)$ find and
add the not yet added fragments 
of the path from $V_k$ to $v_j$
and the edge $(v_j,v_i)$ to $T_{v_j}$.
\item Set  $L_{v_i}(v_k)$ to the link to the beginning of the added
path in $T_{v_j}.$ Analogously, for all newly added to $T_{v_i}$ vertices $v_q,$
set $L_{v_i}(v_q)$ to the link to the position
of $v_q$ in $T_{v_j}.$ 
\end{enumerate}
\end{enumerate}
\end{enumerate}

\begin{theorem}
  Algorithm 3 solves the
  APSP problem on a DAG with $$ vertices and $m$ real-weighted
  edges in time
$O(n^{\omega}+m |\bigcup_{v\in V}\text{leaf}(T_v)|).$
\end{theorem}}
\section{A potential extension to digraphs with large cycles}
As we have already
noted the APSP problem in DAGs with both positive and negative real edge
weights can be solved in $O(n(n+m))$ time.
It is also an interesting open problem
if one can derive substantially  more efficient algorithms for APSP in DAGs
than the $O(n(n+m))$-time method in case of
restricted edge weights, e.g., non-negative edge weights etc.
In this section, under the assumption of the existence
of such substantially more efficient algorithms for DAGs
with non-negative edge weights, we show that 
they could be
extended to include directed graphs having
a polynomial number of large cycles.

The idea of the extension is fairly simple, see Fig. 2.  We pick
uniformly at random a sample of vertices of the input directed graph
that hits all the directed cycles with high probability
(cf. \cite{Z02}). Here, we use the assumption on the minimum size of
the cycles and on the polynomially bounded number of the
cycles. Next, we remove the vertices belonging to the sample and run
the hypothetical fast algorithm for APSP in DAGs on the resulting
subgraph of the input graph which is acyclic with high probability.
In order to take into account shortest path connections using the
removed vertices, we run the Dijkstra's SSSP algorithm from each
vertex in the sample on the original input graph two times. In the
second run we reverse the directions of the edges in the input
graph. Finally, we update the shortest path distances
appropriately. See Algorithm 3 for a more detailed description.
\par
\vskip 4pt
\noindent
{\bf Algorithm 3}
\par
\noindent
    {\em Input:} A directed graph $(V,E)$ with $n$ vertices,
    $m$ non-negatively weighted edges
and a polynomial
number of directed cycles,
each with at least $d$ vertices.
\par
\noindent
{\em Output:} The shortest-path distances for all
ordered pairs of vertices in $V.$
\noindent
\begin{enumerate}
\item
  Initialize an $n\times n$ array $D$  by setting
  all its entries outside the main diagonal
  to $+\infty$ and those on the diagonal
  to zero.
\item Uniformly at random pick a sample $S$
  of $O(n\ln n/d)$ vertices from $V.$
\item Run the hypothetical APSP algorithm for DAGs on the graph\\
  $(V\setminus S,\ E\cap \{(u,v)|u,v \in V\setminus S\})$
and for each pair $u,\ v\in V\setminus S,$
  set $D(u,v)$ to the distance determined
  by the algorithm.
\item
  For each $s\in S$, run the Dijkstra's SSSP algorithm
  with $s$ as the source in $(V,E)$
  and for all $v\in V\setminus \{s\}$
  update the $D(s,v)$ entries respectively.
\item
 For each $s\in S$, run the Dijkstra's SSSP algorithm
 with $s$ as the source on the directed graph  resulting
 from reversing the directions of the edges in $(V,E),$
  and for all $ v\in V\setminus \{s\}$
  update the $D(v,s)$ entries respectively.
\item
  For all pairs $u,\ v$ of distinct vertices in $V\setminus S,$
  and for all vertices $s\in S,$
  set $D(u,v)=\min \{D(u,v), D(u,s)+D(s,v)\}$.
  \end{enumerate}

\begin{figure}
\label{fig: dag2}
\begin{center}
\includegraphics[scale=0.5]{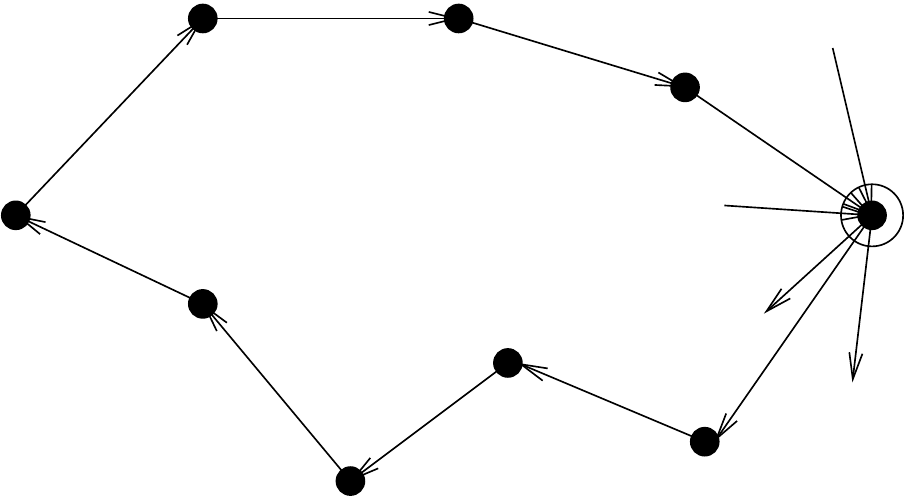}
\end{center}
\caption{An example of a directed cycle that can be broken
  by removing the encircled vertex belonging to the sample.
  To find shortest-path connections passing through this vertex
  two SSSP from it are performed, in the original and the reversed
edge directions, respectively.}
\end{figure}

  \begin{theorem}\label{theo: 3}
Let $t(n,m)$ be the time required by APSP in DAGs with $n$
vertices and $m$ non-negatively weighted
edges.  Algorithm 3 solves the APSP problem for a
    directed graph with $n$ vertices, $m$ non-negatively weighted edges
    and a polynomial number of directed cycles, each with at least $d$
    vertices, in $O(t(n,m)
    +n^3\ln n/d)$ time with high probability.
  \end{theorem}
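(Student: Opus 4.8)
The plan is to establish correctness conditioned on a favorable event for the random sample $S$, then bound the running time; the probabilistic step is the only place the structural hypotheses are used and is the crux. \emph{First, I would show the sample destroys every cycle with high probability.} Every simple directed cycle of the subgraph induced on $V\setminus S$ is a simple directed cycle of $G=(V,E)$, so it suffices that $S$ meets every directed cycle of $G$ with high probability. Fix a directed cycle $C$; by hypothesis $|C|\ge d$, so a uniformly chosen vertex lies on $C$ with probability at least $d/n$, whence $\Pr[S\cap C=\emptyset]\le(1-d/n)^{|S|}\le e^{-|S|d/n}$. Taking $|S|=\Theta(n\ln n/d)$ with a sufficiently large constant makes this at most $n^{-c}$ for any prescribed constant $c$; since $G$ has at most polynomially many, say $n^{c_0}$, directed cycles, a union bound gives that $S$ hits all of them with probability at least $1-n^{c_0-c}=1-o(1)$ once $c>c_0$. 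From here on I condition on this event, so that the graph passed to Step 3 is acyclic and the hypothetical DAG-APSP algorithm applies.

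\emph{Correctness.} Call a pair of distinct vertices $(u,v)$ \emph{internal} if $u,v\in V\setminus S$. If $(u,v)$ is not internal, one of its endpoints is some $s\in S$: Step 4 runs Dijkstra from $s$ in $G$ (correct because the weights are non-negative) and sets $D(s,v)=\mathrm{dist}_G(s,v)$, while Step 5 runs Dijkstra from $s$ in the edge-reversed graph and sets $D(u,s)=\mathrm{dist}_G(u,s)$, both exactly. For an internal pair $(u,v)$: if some shortest path from $u$ to $v$ avoids $S$, then it lies entirely inside the DAG of Step 3 and that step sets $D(u,v)=\mathrm{dist}_G(u,v)$; otherwise every shortest path from $u$ to $v$ meets $S$, and choosing one such path $P$ together with any $s\in P\cap S$ gives, by subpath optimality, $\mathrm{dist}_G(u,v)=\mathrm{dist}_G(u,s)+\mathrm{dist}_G(s,v)=D(u,s)+D(s,v)$, which Step 6 recovers when it minimizes over all $s\in S$. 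Conversely, every value ever assigned to $D$ is the length of an actual walk in $G$, so $D(u,v)$ never drops below $\mathrm{dist}_G(u,v)$; hence all entries are exact. Step 6 need not be iterated, since $D(s,v)$ already accounts for shortest $s$-to-$v$ paths that themselves revisit $S$.

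\emph{Running time and the main difficulty.} Step 1, Step 2, and the initialization cost $O(n^2)$. Step 3 runs the hypothetical algorithm on a DAG with at most $n$ vertices and at most $m$ edges, costing $O(t(n,m))$ under the natural monotonicity of $t$. Steps 4 and 5 each perform $|S|=O(n\ln n/d)$ runs of Dijkstra, each in $O(m+n\log n)=O(n^2)$ time, for $O(n^3\ln n/d)$ in total; Step 6 examines $O(|S|)$ candidates for each of $O(n^2)$ internal pairs, again $O(n^3\ln n/d)$. Summing and absorbing $O(n^2)$ into $O(t(n,m))$ yields the claimed $O(t(n,m)+n^3\ln n/d)$ bound, valid on the high-probability event above. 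I expect the one genuine obstacle to be that event: choosing $|S|$ small enough that the overhead stays at $n^3\ln n/d$, yet large enough — using the bounds on the number and the length of the cycles — to make $V\setminus S$ acyclic with high probability; everything else is routine case analysis and bookkeeping.
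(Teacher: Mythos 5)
Your proposal is correct and follows essentially the same route as the paper: the same hitting-set argument (each cycle of length at least $d$ is missed by a $\Theta(n\ln n/d)$-size sample with probability at most $e^{-|S|d/n}$, then a union bound over the polynomially many cycles), followed by the same accounting of Steps 1--6, with Steps 4--6 dominated by $O(n^3\ln n/d)$. Your explicit case analysis of correctness (internal pairs whose shortest paths avoid or meet $S$, and pairs with an endpoint in $S$) fills in a step the paper leaves as "correctness follows," but it is the intended argument, not a different one.
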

   \begin{proof}
     Suppose that the number of directed cycles in
     the input graph $(V,E)$ is
$O(n^c).$ By picking enough large constant
for the expression $n\ln n /d$ specifying the size of the sample $S,$
the probability that a given directed cycle in $G$
is not hit by $S$ can be made smaller than $n^{-c-1}.$
Hence, the probability that the graph resulting from
removing the vertices in $S$ is not
acyclic becomes smaller than $n^{-1}.$
It follows that Algorithm 3 is correct with high probability.
It remains to estimate its running time.
Steps 1, 2 can be easily implemented in $O(n^2)$ time.
  Step 3 takes $t(n,m)$ time.
  Steps 4, 5 can be implemented
  in $O((n\ln n/d) \times m +n^2 \ln^2 n /d)$ time \cite{CLR}.
  Finally, Step 6 takes $O(n^3\ln n/d)$ time.
   \end{proof}

   Note that because of the term $n^3\ln n/d$ in the upper time-bound
   given by Theorem \ref{theo: 3}, the upper bound can be substantially
   subcubic only when $d=\Omega (n^{\delta})$ for some $\delta > 0.$
   
\section{Experimental results}

We have implemented Algorithm 1 and the Bellman-Ford algorithm in order to
compare the quality of their estimation of the shortest-path distances
after corresponding iterations. We have also implemented Algorithm 2
and the standard APSP algorithm for DAGS ($n-1$ runs of of the SSSP
dynamic programming algorithm for DAGs) in order to compare their
running times.

For the comparison sake, we used Erdős–Rényi $G(n, p)$ random graph
model, and generated $100$ pseudorandom graphs for $n \in \{10, 100,
1000\}$ and $p \in \{0.2, 0.4, 0.6, 0.8\}$.
We used
\texttt{mt19937} implementation of Mersenne Twister pseudorandom
number generator from GNU C++ Standard Library version 10.2.
Pseudorandom integer
weights from the interval $[-1000, 1000]$ were assigned to the edges.
In case of the APSP algorithms for DAGS, the generated pseudorandom
graphs were converted into DAGs simply by directing each edge
$\{v_i,v_j\}$, where $i<j,$ from $v_i$ to $v_j.$

All four algorithms
were implemented in C++ and Google Benchmark library
was used to measure the CPU time.  High-resolution clock with
nanosecond precision was
used for time measurement.  The code
was
compiled with \texttt{-O2} optimization flag using GNU C++ Compiler
version 10.2, and was executed on a PC with Intel Core i5-2557M 2.7
GHz CPU and 4 GB RAM running Linux kernel version 5.11.15.

\subsection{Algorithm 1}

We have compared the quality of estimations of shortest-path distances
in initial iterations of Algorithm 1 and the Bellman-Ford
algorithm. We count Step 4 as the first iteration, and then each
performance of Steps 5.a and 5.b as consecutive iterations of
Algorithm 1.  In an iteration of the Bellman-Ford algorithm, for each
edge $e$, the current distance (from the source) at the head of $e$ is
compared to the sum of the current distance at the tail of $e$ and the
weight if $e.$ If the sum is smaller the distance at the head of $e$
is updated.

Figures 4 and 5 (see Appendix) show the proportions between the numbers
of vertices for which Algorithm 1 or the Bellman-Ford algorithm
respectively provides a sharper estimation of the shortest-path
distance from the source in corresponding iterations for pseudorandom
graphs on $10$ and $100$ vertices. The figures support the claim
that Algorithm 1 provides reasonable estimation substantially
faster than the Bellman-Ford algorithm does.

\subsection{Algorithm 2}

In one of the initial steps of Algorithm 2, the transitive closure of
the input DAG is computed.  For dense DAGs, the computation of the
transitive closure involves fast matrix multiplication algorithm
known to have huge overhead. Since we run Algorithm 2 on relatively small DAGs
where the aforementioned overhead could shadow the time performance of
the core of the algorithm, we do not account the time taken by the
transitive closure step in our results. See Figure 3.
\junk{
The mean and standard deviation values of the measured CPU time are
reported in the table in Fig. 5.
\begin{figure}[h]
\begin{tabular}{llllll}
\toprule
& & $n=10$ & $n=100$ & $n=1000$ & $n=10000$ \\
\midrule
\multirow{2}{*}{$p=0.2$}
& Baseline & $0.000371\pm0.000035$ & $0.285\pm0.012$ & $236\pm2$ & $238347\pm376$ \\
& Algorithm 2 & $0.000221\pm0.000040$ & $0.162\pm0.009$ & $94\pm2$ & $120679\pm929$ \\
\midrule
\multirow{2}{*}{$p=0.4$}
& Baseline & $0.000504\pm0.000047$ & $0.433\pm0.016$ & $461\pm4$ & $458933\pm1631$ \\
& Algorithm 2 & $0.000353\pm0.000063$ & $0.230\pm0.015$ & $150\pm4$ & $175463\pm947$ \\
\midrule
\multirow{2}{*}{$p=0.6$}
& Baseline & $0.000646\pm0.000053$ & $0.557\pm0.018$ & $697\pm7$ & $683306\pm917$ \\
& Algorithm 2 & $0.000496\pm0.000076$ & $0.277\pm0.0230$ & $196\pm7$ & $221904\pm2371$ \\
\midrule
\multirow{2}{*}{$p=0.8$}
& Baseline & $0.000789\pm0.000052$ & $0.705\pm0.018$ & $945\pm7$ & $913757\pm1281$ \\
& Algorithm 2 & $0.000607\pm0.000089$ & $0.309\pm0.029$ & $233\pm7$ & $259436\pm2522$ \\
\bottomrule
\end{tabular}
\caption{Mean and standard deviation for CPU time in milliseconds}
\end{figure}
}
\begin{figure}[h]
\begin{tabular}{lllll}
\toprule
& & $n=10$ & $n=100$ & $n=1000$ \\
\midrule
\multirow{2}{*}{$p=0.2$}
& Baseline & $0.000371\pm0.000035$ & $0.285\pm0.012$ & $236\pm2$ \\
& Algorithm 2 & $0.000221\pm0.000040$ & $0.162\pm0.009$ & $94\pm2$ \\
\midrule
\multirow{2}{*}{$p=0.4$}
& Baseline & $0.000504\pm0.000047$ & $0.433\pm0.016$ & $461\pm4$ \\
& Algorithm 2 & $0.000353\pm0.000063$ & $0.230\pm0.015$ & $150\pm4$ \\
\midrule
\multirow{2}{*}{$p=0.6$}
& Baseline & $0.000646\pm0.000053$ & $0.557\pm0.018$ & $697\pm7$ \\
& Algorithm 2 & $0.000496\pm0.000076$ & $0.277\pm0.0230$ & $196\pm7$ \\
\midrule
\multirow{2}{*}{$p=0.8$}
& Baseline & $0.000789\pm0.000052$ & $0.705\pm0.018$ & $945\pm7$ \\
& Algorithm 2 & $0.000607\pm0.000089$ & $0.309\pm0.029$ & $233\pm7$\\
\bottomrule
\end{tabular}
\caption{Mean and standard deviation for CPU time in milliseconds}
\end{figure}
\junk{
For further illustration the benchmark results see Fig. 6, 7
in Appendix. Fig. 6 presents a plot showing the mean
and the standard deviation for CPU time versus $n$
in the logarithmic scale. Fig. 7 presents 
scatter plots showing CPU time versus
$m$ (number of edges) for different values of $n.$}
   \section{Final remarks}

   In the absence of substantial asymptotic improvements to the time
   complexity of basic shortest-path algorithms, often formulated at
   the end of 50s, like the Bellman-Ford algorithm and Dijkstra's
   algorithm, the results presented in this paper should be of
   interest. Our output-sensitive algorithm for the general APSP
   problem in DAGs possibly could lead to an improvement of the
   asymptotic time complexity of this problem in the average case.
   A probabilistic analysis of the number of leaves in the
   lexicographically-first shortest-path  trees is an interesting open
   problem.

   In the vast literature on shortest path problems, there are several
   examples of output-sensitive algorithms.  For instance, Karger et
   al. \cite{KKP} and McGeoch \cite{Mc} could orchestrate the $n$ runs of
   Dijkstra's algorithm in order to solve the APSP problem for
   directed graphs with non-negative edge weights in $O(m^*n+n\log n)$
   time, where $m^*$ is the number of (essential) edges that
   participate in shortest paths.

   Finally, note that DAGs
   have several important scientific and computational applications in
   among other things scheduling, data processing networks,
    biology (phylogenetic networks, epidemiology),
    sociology (citation networks), and data compression.
    For these reasons, efficient algorithms for shortest paths in
    DAGs are of not only theoretical interest.


\vfill
\newpage
\section*{Appendix}
\begin{figure}[h]
\label{fig: figesa1}
\begin{center}
\includegraphics[scale=0.5]{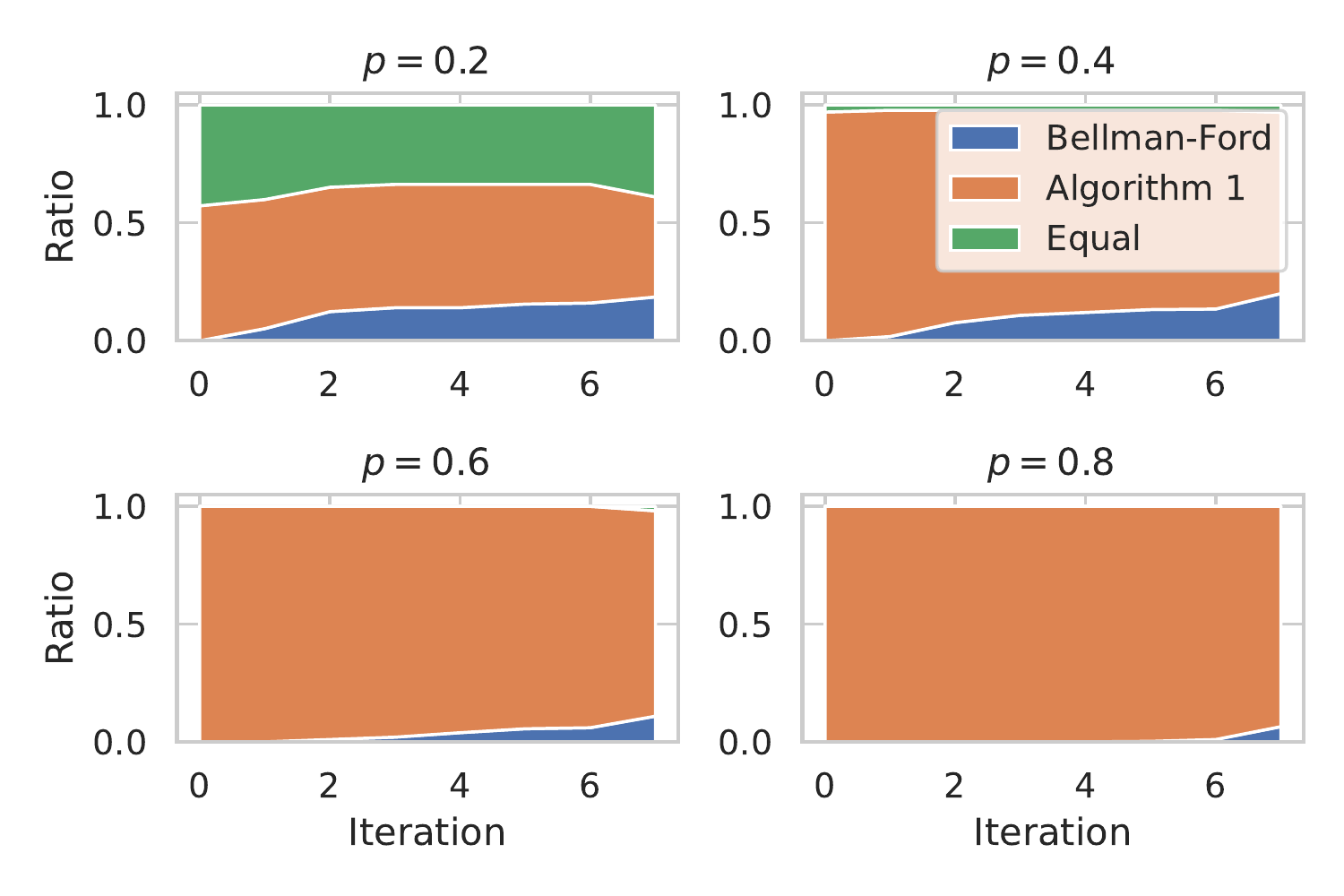}
\end{center}
\caption{A comparison of Algorithm 1 with the Bellman-Ford
  algorithm on pseudorandom graphs with 10 vertices.
  The proportions between the numbers of vertices for
  which Algorithm 1 or the Bellman-Ford algorithm
  respectively gives better estimation of
  the shortest-path distance from the source in corresponding
iterations are visualized with the colors.}
\end{figure}

\begin{figure}[h]
\label{fig: figesa1}
\begin{center}
\includegraphics[scale=0.5]{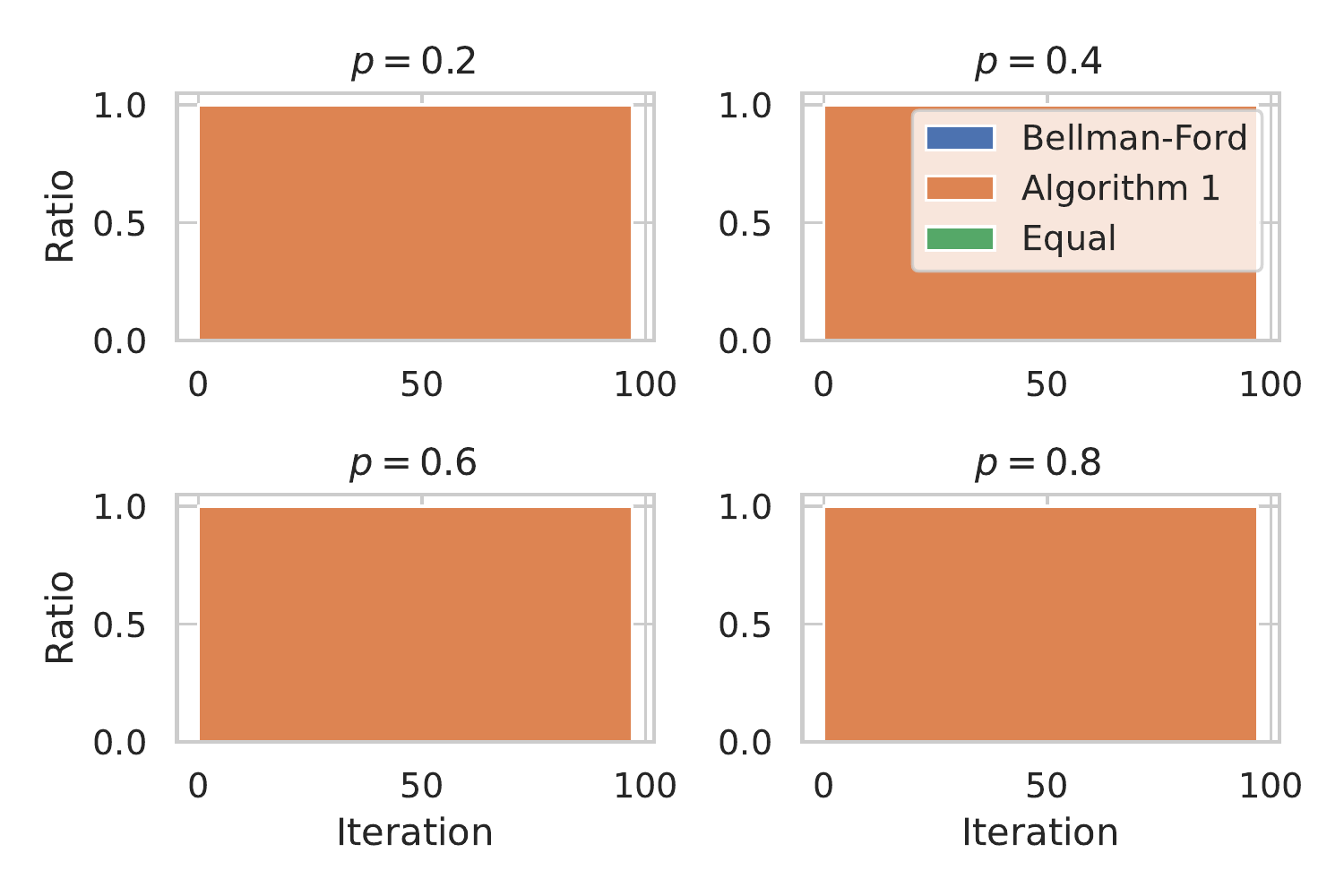}
\end{center}
\caption{An analogous comparison of Algorithm 1 with the Bellman-Ford
  algorithm on pseudorandom graphs with 100 vertices.}
\end{figure}
\end{document}